\def\RATIOP{302} % = 3*(3*2^5) + 12 + 2
\def\RATIOG{34}  % = 3*(11+\eps)+1 (without the \eps)
\def\RATIO{906}  % = 3*\RATIOP
\def\RATIOGO{35} % = \RATIOG+1
\newtheorem{fact}{Fact}
\def\myparagraph#1{\medskip\noindent\textbf{#1}} % plus joli
\def\myparagraph#1{\subparagraph*{#1}} % pour LIPIcs
\newcommand{\set}[1]{\left\{{#1}\right\}}
\newcommand{\range}[2]{\set{{#1},\dots,{#2}}}
\newcommand{\ceil}[1]{\left\lceil{#1}\right\rceil}
\newcommand{\card}[1]{\left|{#1}\right|}
\newcommand{\LOCAL}{\textsf{LOCAL}\xspace}
\DeclareMathOperator{\MDS}{MDS}
\DeclareMathOperator{\asdim}{asdim}
\DeclareMathOperator{\poly}{poly}
\def\CC{\mathcal{C}}
\def\sfA{\mathsf{A}}
\def\sfB{\mathsf{B}}
\def\sC{\mathscr{C}}
\def\sD{\mathscr{D}}
\def\bbN{\mathbb{N}}
\def\minDS{\textsc{Minimum Dominating Set}\xspace}
\def\eps{\varepsilon}
\def\emo{{\eps}^{-1}}
\let\le\leqslant
\let\ge\geqslant
\let\leq\leqslant
\let\emptyset\varnothing
\title{Distributed Approximation Algorithms for Minimum Dominating Set in Locally Nice Graphs}
\titlerunning{}
\author{Marthe Bonamy}%
{LaBRI, University of Bordeaux, CNRS, France}%
{marthe.bonamy@u-bordeaux.fr}%
{https://orcid.org/0000-0001-7905-8018}{}
\author{Cyril Gavoille}%
{LaBRI, University of Bordeaux, CNRS, France}%
{gavoille@labri.fr}%
{https://orcid.org/0000-0003-3671-8607}{}
\author{Timoth{\'e} Picavet}%
{LaBRI, University of Bordeaux, CNRS, France}%
{timothe.picavet@u-bordeaux.fr}%
{https://orcid.org/0000-0002-7129-0127}{}
\author{Alexandra Wesolek}%
{Institut für Mathematik, Technische Universit{\"a}t Berlin, Germany}%
{wesolek@tu-berlin.de}%
{https://orcid.org/0000-0003-4841-5937}{}
\authorrunning{M. Bonamy et al.}
\keywords{distributed algorithm, local model, dominating set, bounded genus graphs}
\begin{document}
\maketitle

\begin{abstract}
 We give a new, short proof that graphs embeddable in a given Euler genus-$g$ surface admit a simple $f(g)$-round $\alpha$-approximation distributed algorithm for \minDS (MDS), where the approximation ratio $\alpha \le \RATIO$. Using tricks from Heydt et al. [\textit{European Journal of Combinatorics} (2025)], we in fact derive that $\alpha \le \RATIOG +\eps$, therefore improving upon the current state of the art of~$24g+O(1)$ due to Amiri et al. [\textit{ACM Transactions on Algorithms} (2019)]. It also improves the approximation ratio of~$91+\eps$ due to Czygrinow et al. [\textit{Theoretical Computer Science} (2019)] in the particular case of orientable surfaces.

% "... Smith et al. [Journal of Very Important Results, 1, 374 (2012)]"

All our distributed algorithms work in the deterministic \LOCAL model. They do not require any preliminary embedding of the graph and only rely on two things: a \LOCAL algorithm for MDS on planar graphs with ``uniform'' approximation guarantees and the knowledge that graphs embeddable in bounded Euler genus surfaces have asymptotic dimension~$2$. % (for any fixed genus).

More generally, our algorithms work in any graph class of bounded asymptotic dimension where ``most vertices'' are locally in a graph class that admits a \LOCAL algorithm for MDS with uniform approximation guarantees.
\end{abstract}

%\newpage

%%%%%%%%%%%%%%%%%%%%%%%%%%%%%%%%%%%%%%%%%%%%%%%%%%%%%%%%%%%%%%%%%%%%%%%%%

\section{Introduction}
\label{sec:intro}

% Cyril: 
% \mDS -> problem name
% minimum dominating set -> a special subset of vertices

\minDS (MDS) is a famous minimization problem on graphs, known to be NP-complete even in cubic planar graphs~\cite{GJ79,KYK80}. The goal is to find a smallest subset of vertices of the input graph that intersects all radius-1 balls of the graph.

In this paper, we focus our attention on distributed algorithms that can approximate MDS on the graph underlying the topology of the network. Due to the covering property of a dominating set, the problem and its variants (like connected dominating set~\cite{WAF02}) get increasingly more attention in Networking, in particular for mobile and ad-hoc networks. Not only are mobile networks important for point-to-point communications, but specialized ad-hoc networks, such as sensor networks, are important for environmental monitoring tasks. We refer to \cite{KW05,KWZ09} for extended discussions about the motivations of MDS for routing purposes of such networks.

We consider the standard \LOCAL model. In this model, popularized by Linial~\cite{Linial87}, nodes of the underlying network $G$ work in synchronous rounds of communications between direct neighbors of $G$ and are performed with messages of unlimited size. As the aim of this model is to study the ``local nature'' of a problem, the main complexity measure is the \emph{round complexity}, the maximum number of rounds of communication to perform a given task in the network.

In this model, approximating MDS up to a constant factor in general $n$-vertex graphs is known to require $\Omega(\sqrt{\log{n}/\log\log{n}}\,)$ rounds~\cite{KMW16,CL20}. % \cyril{We do not care about the bound expressed as the maximum degree, since we will not speak about this parameter in the paper. The goal is to say that this is not bounded.}
On the positive side, it is possible to $(1+\eps)$-approximate MDS for any graph in $\poly(\emo\log{n})$ rounds by combining the techniques of~\cite{GKM17} and of~\cite[Corollary~3.11]{RG20}.

For specific graphs, better round complexities can be achieved; cf.~\cite{LPW13,Suomela13} for a large collection of results (including unit-disk graphs, and planar graphs as a basis of the Gabriel graph model widely used in ad-hoc networks~\cite{WY07}). For instance, $O(\log^*{n})$ rounds suffice for planar graphs~\cite{CHW08a}, or more generally for $K_t$-minor-free graphs~\cite{CHW18} and for sub-logarithmic expansion graphs~\cite{ASS19}, and $o(\log^*{n})$ rounds are not sufficient to get a $(1+\eps)$-approximation of MDS on cycles~\cite{CHW08a} or unit-disk graphs~\cite{LW08}. So, achieving constant-round algorithms must be at the price of relaxing the $(1+\eps)$ approximation ratio.

For planar graphs, the quest for constant-approximation and constant-round algorithms seems to start with \cite{LOW08}, with a $126$-approximation. Since then, a long line of research has been aimed at improving this ratio. In short, the best approximation ratio to date is $11 + \eps$, due to~\cite{HKOSV25}. We refer to the nice survey of~\cite{HKOSV25} for planar graphs and sub-families, including lower bounds.

% There are $(1+\eps)$-approximation with poly-logarithmic number of rounds $O(\log^{28.7}{n}\cdot\log\log{n}\cdot\log^*{n})$ \cite{CHS06}.

In the meantime, the quest for constant-approximation ratio and constant-round distributed algorithms has been proposed for larger classes of graphs. However, very few examples are known of graph classes $\sC(p)$, depending of some fixed parameter $p$, where \minDS can be solved by a \LOCAL algorithm with round complexity $f(p)$ and truly-constant approximation ratio (independent of $p$). 

Larger graph classes that make good candidates for extending planar graphs are $H$-minor-free graphs, with some graph $H$ not limited to $K_5$ or $K_{3,3}$. For $K_p$-minor-free graphs, we only know of an exponential approximation ratio~\cite{KSV21,HKOSV25}, whereas a linear approximation ratio could be possible. For $K_{3,p}$-minor-free graphs~\cite{HKOSV25}, the dependency is much better (it is linear in~$p$), but still not truly constant. Very recently, \cite{BGPW25} show that, for $K_{2,p}$-minor-free graphs, the approximation ratio is~$50$ regardless of the value of $p$. %This has been extended to every any excluded minor of pathwidth at most two~\cite{}.

% graphs on surface

\def\rmM{\mathrm{M}}

% [CdV10]: ... In this paper, unless noted otherwise, S is a compact, connected surface without boundary; g denotes its Euler genus. Thus, if S is orientable, g ≥ 0 is even, and S is a sphere with g/2 handles attached; if S is non-orientable, S is a sphere with g ≥ 1 disks replaced by Möbius strips.

% [CPSTY12]: ... All graphs in this paper are finite, undirected and simple. By a surface we mean a compact, connected 2-dimensional manifold with empty boundary. The classification theorem of surfaces (see e.g. [16]) states that each surface is homeomorphic to either S_g, the surface obtained from the sphere by adding g handles, or N_k, the surface obtained from the sphere by adding k cross-caps. Thus S_0 = N_0 is the sphere, S_1 is the torus, N_1 is the projective plane and N_2 is the Klein bottle ... Heawood [11] proved that if Σ is not the sphere, then every graph in Σ is t-colorable as long as t ≥ H(Σ) := ⌊(7 + √24γ + 1)/2⌋, where γ is the Euler genus of Σ, defined as γ = 2g when Σ = S_g and γ = k when Σ = N_k. 

On the other end of the spectrum, the narrowest way to meaningfully extend planar graphs is through embeddable graphs\footnote{I.e., that can be drawn on the surface without edge crossing, like planar graphs for the sphere.} on surfaces\footnote{That are compact, connected $2$-dimensional manifold without boundary.} of Euler genus $g$. Such surfaces can be obtained from a sphere by adding $g/2 \ge 0$ handles\footnote{By adding a handle, we mean that two disjoint disks of the sphere are replaced by a cylinder.} (if they are orientable) or by adding $g\ge 1$ cross-caps\footnote{By adding a cross-cap, we mean that a disk of the sphere is replaced by a M{\"o}bius strip.} (if they are nonorientable). The \emph{Euler genus} of a graph $G$ is the minimum number $g$ such that $G$ embedds on a surface of Euler genus~$g$. In this context, \cite[Theorem~3.11]{ASS19} proposed an $(24g+O(1))$-approximation algorithm with constant-round complexity (actually linear in $g$). For graphs of orientable genus $g$, i.e., embeddable on an orientable surface of genus $g$, \cite{CHWW19} designed a constant-round algorithm that returns a dominating set of size at most $91\rmM + 76g - 66$, where $\rmM$ is the optimal size. By adding a simple extra brute-force step in their algorithm, we can convert it into a $(91+\eps)$-approximation\footnote{\label{foot:91+eps}Observe that the diameter of each connected component of the graph is at most $3\rmM$. So, each vertex check whether the diameter $D$ of its component is less than $B$, for $B = 3/\eps \cdot (76g-66)$. This can be done by collecting its radius-$B$ neighborhood in $B = O(g/\eps)$ rounds. If true, the vertex locally brute forces an optimal solution for its component that, which it already knows about. If false, the vertex applies the approximate algorithm. In that case, $B\le D$ and $D \le 3 \rmM$, which together imply $76g-66 \le \eps \rmM$. Thus, the returned set has size at most $91\rmM + 76g-66 \le (1+\eps)\rmM$.}, so with a ratio independent of $g$. However, the result of \cite{CHWW19} does not transfer to graphs of bounded Euler genus, as for any $g$, there are graphs embeddable on the projective plane\footnote{For instance, a cycle with $n = 2g+6$ vertices where opposite vertices are connected by an extra edge, cf. \cite{ABY63}.} that cannot be embedded on any orientable surface of genus $g$ (so of orientable genus at least $g+1$). So, for these graphs, the approximation ratio will be $91+\eps$ but the number of rounds possibly unbounded (possibly linear in the number of vertices).

Graphs of Euler genus-$g$ are included in $K_{3,2g+3}$-minor-free graphs, because the Euler genus of $K_{3,2g+3}$ is at least $g+1$ (cf. \cite[Theorem~4.4.7]{MT01}) and the class of Euler genus-$g$ graphs is closed under taking minors. We observe that the approximation ratio for $K_{3,p}$-minor-free graphs, for $p = 2g+3$, due to \cite[Theorem~2.2]{HKOSV25} provides an approximation ratio of $O(\sqrt{g})$. Indeed, the ratio of their algorithm is precisely $(2 + \eps)(2\nabla_1 + 1)$, where $\nabla_1$ is the maximal edge density of a depth-$1$ minor of any graph of the class. As we can check\footnote{Indeed, consider any $n$-vertex graph $G$ of Euler genus $g$, $\nabla_1(G)$ is no more than its edge density, and thus $\nabla_1(G) \le \card{E(G)}/n \le 3 + 3g/n$ from Euler's formula. As $\card{E(G)} \le \binom{n}{2}$, $\nabla_1(G) \le \min\set{(n+1)/2, 3g/n + 3}$ which is no more than $\sqrt{3g/2} + 3$.}
% Cyril: Let us check that M = min((n+1)/2,3+3g/n) <= 3+x*s where s=sqrt(g) and x=sqrt(3/2). Case 1: n <= 5 + 2*x*s. Then, M <= (n+1)/2 <= 3 + x*s, and we are done. Case 2: n >= 6 + 2*x*s. Then, M <= 3+3g/n <= 3 + 3g/(6+2*x*s). We need to check that 3 + 3g/(6+2*x*s) <= 3 + x*s, for x=sqrt(3/2). This is equivalent to check that 3g <= x*s*(6+2*x*s). We have x*s*(6+2*x*s) = 2*x^2*s^2 + 6*x*s = 2*(3/2)*g + 6*x*s = 3g + w, where w = 6*sqrt(3g/2) >= 0 for all g>=0. We are done in this case too.}
that $\nabla_1 \le \sqrt{3g/2} + 3$, the approximation ratio is at most $4\sqrt{3g/2} + 14 + \eps \approx 4.9 \sqrt{g} + O(1)$. As it may occur that $\nabla_1 \ge \sqrt{3g/2} - O(1)$ for some Euler genus-$g$ graph\footnote{Consider a clique $K_n$ for some $n = \sqrt{6g} + O(1)$, more precisely such that $\ceil{(n-3)(n-4)/6} = g$, where each edge is replaced by a path of length three. Such a graph has same Euler genus as $K_n$, which is~$g$ (cf.~\cite[Theorem~4.4.5]{MT01}), and has $K_n$ has depth-$1$ minor. Thus, $\nabla_1(G) \ge \card{E(K_n)}/n = (n+1)/2 \ge  \sqrt{3g/2} - O(1)$.}, the best known approximation ratio for MDS in Euler genus-$g$ graphs is $\Theta(\sqrt{g})$, and $91+\eps$ for orientable genus-$g$ graphs. 

\myparagraph{Our contributions.}

In the remaining of the paper, by ``\LOCAL algorithm'', we mean deterministic distributed algorithm in the \LOCAL model.

\begin{itemize}

\item On the conceptual side, we propose a generic algorithm~$\sfB$, that takes as input any ``uniform'' approximation \LOCAL algorithm $\sfA$ suitable for some graph class $\sC$, and that converts $\sfA$ into an approximation \LOCAL algorithm for a class of graphs that are almost ``locally'' in $\sC$ and of bounded asymptotic dimension. (See \cref{prop:local_approx_w_errors}.)

\item We give a simple proof and a simple algorithm~$\sfA$ for planar graphs that is a ``uniform'' $\RATIOP$-approximation in~$5$ rounds. (See \cref{prop:uniform_planar}.)

\item As a consequence and as a first technical contribution, we obtained, for Euler genus-$g$ graphs, a simple $\RATIO$-approximation algorithm with round complexity $C(g)$, for some function $C$. The algorithm is derived from an extension of an intriguing proposition due to \cite{BGPW25} (see also \cref{prop:local_approx_w_errors}), combined with the fact that such graphs have an asymptotic dimension at most two.

\item Using the fine analysis of \cite{HKOSV25} showing that their $(11+\eps)$-approximation algorithm is ``uniform'' (see \cref{obs:localplanar}), the ratio $\RATIO$ is decreased to $\RATIOG+\eps$, improving previous ratios that were $\Omega(\sqrt{g})$ at their best, and also the ratio $91+\eps$ for graphs of bounded orientable genus. (See \cref{th:bounded_genus}.)

\item More generally, our algorithms work in for ``locally nice'' graphs, that is any graph class of bounded asymptotic dimension where ``most vertices'' are locally in a graph class that admits a LOCAL algorithm for MDS with uniform approximation guarantees. In particular, for classes of asymptotic dimension $d$ that are locally of Euler genus $g$, we show an approximation factor of $\RATIOGO(d+1)$ with a round complexity of $C(g,d)$, for some function $C$. (See \cref{th:locally_bounded_genus}.)

\end{itemize}

Unlike similar notions, such as the classical network decomposition~\cite{AGLP89}, no preliminary construction is required for any of our algorithms. Obviously, they depend on the parameters of the graph classes on which they are applied, like the Euler genus $g$ and/or the asymptotic dimension of the class with its control function. A potential weakness of our algorithms is that it may use large messages, as is allowed in the \LOCAL model. However, we note that the extra step to convert the approximation algorithm of \cite{CHWW19} requires also large messages (see \cref{foot:91+eps}). We also use a brute-force algorithm for computing an optimal MDS on bounded radius part of the graph. Recall that the \LOCAL model focuses on the communication complexity of the algorithm, i.e., the number of rounds it takes to execute it in the worst case. In this model, each vertex has unbounded computational power. Thus, it allows us to brute-force optimal solutions on small-diameter graphs. However, we observe that the graphs we consider have bounded local treewidth -- that is, the treewidth of any radius-$r$ ball depends on $r$ and on $g$, the Euler genus of the ball. This structural property allows every vertex to calculate in polynomial time an optimal solution for MDS in its constant-radius neighborhood. More generally, this holds for any problem expressible in monadic second-order logic~\cite{Courcelle90}. Therefore, computational power does not seem to be a real limitation here.
%\tim{This is actually for MSO}

%\cyril{It is not clear that the 91 bound of \cite{CHWW19} works for "Euler genus". They say that "bounded genus $g$" graphs exclude $K_{3,4g+3}$, whereas \cite{BBEGx24} say explicitly (above Corollary~1.10) that Euler genus-$g$ graphs exclude $K_{3,2g+3}$. Generally speaking, the move from $g$ to $2g$ is due to orientable genus vs. Euler genus, the latter being more general.}

%\cyril{For the exact control function of Euler genus graphs, this is derived from the $2$-dimensional control function of $K_{3,p}$-minor free graphs in \cite[Theorem 1.9]{BBEGx24}. We can extract from Lemma 7.4 its control function, TODO. In fact, when we plug the $11+\eps$ results, we loose the exact round complexity, because they only give a round complexity of $C(\eps)$.}

%\myparagraph{State of the art.} ...

%Marthe : on copie-colle la table de PODC ?

\myparagraph{Organization  of  the  paper.} Notation and main notions are introduced in \cref{sec:prelim}: this notably includes uniform approximation and asymptotic dimension. \cref{sec:main} presents the main technical proposition for our algorithms, in particular the generic algorithm~$\sfB$ and the statements of the main theorems. \cref{sec:propwitherrors} and \cref{sec:localplanar} are dedicated to the proof of the key proposition using asymptotic dimension and to the fact that planar graphs admit a constant-uniform constant-approximation algorithm in constant rounds, respectively.

%%%%%%%%%%%%%%%%%%%%%%%%%%%%%%%%%%%%%%%%%%%%%%%%%%%%%%%%%%%%%%%%%%%%%%%%%

\section{Preliminaries}\label{sec:prelim}

\myparagraph{Graphs and domination.}

Consider a graph $G$, a vertex $v$ of $G$, $r \in\bbN$, and a subset $S\subseteq V(G)$. We denote by $N^r[v]$ the radius-$r$ ball centered at $v$ in $G$, that is, the set of all vertices at distance at most $r$ from $v$ in $G$. For short, we denote by $N[v]$ the set $N^1[v]$, and by $N(v)$ the set $N[v]\setminus\set{v}$ (the set of neighbors of $v$). By extension, $N^r[S] = \bigcup_{v \in S} N^r[v]$. Lastly, we denote by $G[S]$ the subgraph of $G$ induced by all the vertices of $S$. The \emph{weak diameter in $G$} of a subgraph $H$ of $G$ is the maximum distance in $G$ between any two vertices of $H$.

A subset $X\subseteq V(G)$ \emph{dominates} $S$ if $S\subseteq N[X]$. We denote by $\MDS(G,S)$ the minimum size of a subset of $V(G)$ dominating $S$. Note that $\MDS(G,V(G))$, denoted by $\MDS(G)$ for short, is the size of a minimum dominating set for $G$.

\myparagraph{Locally nice graphs.}

Consider two graph classes $\sC$ and $\sD$, and let $T \in\bbN$.

We say that $\sD$ is \emph{$T$-locally-$\sC$} if for every $G \in\sD$ and $u\in V(G)$, $G[N^T[u]]\in\sC$.

For a graph $G \in\sD$, the \emph{$T$-error set} of $G$ w.r.t. $\sC$ is the vertex set defined by $X = \set{u\in V(G) \mid G[N^T[u]] \notin \sC}$. In other words, this is the set of vertices that are witnesses for $G$ not being $T$-locally-$\sC$. Clearly, if $G$ has no $T$-errors, that is $X = \emptyset$, then $G$ is $T$-locally-$\sC$.

\myparagraph{Uniform approximation.}

For a \LOCAL algorithm $\sfA$ that returns a subset of vertices, we denote by $\sfA(G)$ the set returned by $\sfA$ when run on $G$. %The first important notion is that of \emph{uniformity} of a \LOCAL approximation algorithm for \minDS.

A \LOCAL algorithm $\sfA$ is a \emph{$k$-uniform} $\alpha$-approximation for \minDS in a hereditary\footnote{That is a class of graphs closed under vertex deletion.} class of graphs $\sC$, if for every $G\in\sC$ and $S\subseteq V(G)$, $\card{\sfA(G)\cap S} \le \alpha \cdot \MDS(G,N^k[S])$.

Unless explicitly said, we will assume for convenience that $k\ge 1$, as otherwise only very restricted graph class $\sC$ can support $0$-uniform constant-approximation algorithms for MDS. For example, consider a depth-2 tree $T$, with $\alpha + 1$ vertices at depth-1, each having $\alpha^2 + 3$ neighbors at depth-2. Clearly, $\MDS(T) = \alpha+1$. Moreover, any $\alpha$-approximation algorithm $\sfA$ in $T$ has to select all depth-1 vertices. If not, each non-selected vertex at depth-1 would force to select $\alpha^2+3$ extra vertices at depth-2, creating by this way a solution with at least $(\alpha-1) + (\alpha^2+3)$ vertices, which is strictly more than $\alpha \cdot \MDS(T)$. However, for a subset $S$ composed of all depth-1 vertices of $T$ (so with $\card{S} = \alpha+1$), we have on one side $\card{\sfA(G) \cap S} \ge \alpha+1$, whereas $\MDS(G,N^0[S]) = \MDS(G,S) = 1$ (considering the root), so $\card{\sfA(G) \cap S} \not\le \alpha \cdot \MDS(G,N^0[S])$. Therefore, $\sfA$ cannot be $0$-uniform, and more generally, every graph class $\sC$ including trees has no $0$-uniform $\alpha$-approximations for MDS, for every constant $\alpha$.

\myparagraph{Asymptotic dimension.}

A graph $G$ with a spanning subgraph $H$ is \emph{$k$-colorable $\delta$-bounded in $H$} if each vertex of $G$ can be assigned a color from $\range{0}{k-1}$ % Cyril: I changed to {0..k-1} (instead of {1..k}) as in subsequent proofs, we will use colors in {0..d}
so that the distance in $H$ between any two vertices taken in a monochromatic path of $G$ is at most $\delta$. In other words, all monochromatic connected components of $G$ must have a weak diameter in $H$ at most $\delta$. % Cyril: next is an example to stress that $k$-coloring here is not the usual proper $k$-coloring.
A proper $k$-coloring of $G$ is nothing else than a $k$-coloring $0$-bounded in $H$. If $H$ has diameter $D$, then it is $1$-colorable $D$-bounded in $H$.

The asymptotic dimension of a graph class $\sC$ is a non-negative integer denoted by $\asdim(\sC)$, and related to the $k$-coloring boundedness of the $r$-th power of $G$. It is a notion closely related to \emph{sparse partitions} and \emph{weak sparse covers} (see~\cite[§1.8]{BBEGx24} for precise connection between these notions). Recall that the $r$-th power of $G$, denoted by $G^r$, is the graph obtained from $V(G)$ by adding edges between pair of vertices at distance at most $r$ in $G$. 

Rather than giving its standard definition, we prefer the following equivalent one\footnote{In the original Proposition~1.17 of \cite{BBEGx24}, $G^r$ is $(d+1)$-colorable $f(r)$-bounded in $G^r$, instead of in $G$. By following this original definition, the function $f$ is not a control function, whereas in \cref{prop:asdim_def} it is.}:

\begin{proposition}[{\cite[Proposition 1.17]{BBEGx24}}]\label{prop:asdim_def}
For every graph class $\sC$, $\asdim(\sC) \le d$ if and only if there exists a function $f: \bbN \to \bbN$, called \emph{control function}, such that for every $G \in \sC$ and integer $r\ge 1$, $G^r$ is $(d+1)$-colorable $f(r)$-bounded in $G$. %\cyril{Does not work with $r=0$.} \cyril{Je me suis gouré: d'après la preuve de Porp. 1.17, la fonction de controle pour $\sC$, c'est $r\cdot f(r)$ et pas $f(r)$.}
\end{proposition}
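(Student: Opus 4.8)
The plan is to exploit the fact that the stated proposition differs from the known characterization of \cite[Proposition~1.17]{BBEGx24} only in \emph{where} the weak diameter of monochromatic components is measured: the cited version asks each monochromatic component of $G^r$ to have weak diameter in $G^r$ at most $f(r)$, whereas \cref{prop:asdim_def} asks for weak diameter in $G$. I would therefore take the $G^r$-version as the starting point (we may assume it, since it is the characterization being cited) and show that, using the \emph{same} $(d+1)$-coloring, the two boundedness conditions are equivalent after rescaling the control function by a factor $r$. This reduces the whole statement to a metric comparison between $G$ and its power $G^r$.

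First I would record the key distance identity: for any two vertices $u,v$ in the same component of $G$ and any $r\ge 1$, one has $d_{G^r}(u,v) = \ceil{d_G(u,v)/r}$. The inequality $d_G(u,v)\le r\cdot d_{G^r}(u,v)$ follows because every edge of $G^r$ is witnessed by a walk of length at most $r$ in $G$, so a $G^r$-path concatenates into a $G$-walk of at most $r$ times its length; conversely, subsampling a shortest $G$-path every $r$ steps produces a $G^r$-walk of length $\ceil{d_G(u,v)/r}$, giving $d_{G^r}(u,v)\le \ceil{d_G(u,v)/r}$. Applied to two vertices lying on a common monochromatic path, this identity translates any bound on their $G^r$-distance into a bound on their $G$-distance and vice versa.

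With the identity in hand, both directions are short and leave the coloring untouched, changing only the bounding function. For the direction assuming $\asdim(\sC)\le d$: taking a $(d+1)$-coloring of $G^r$ that is $f_0(r)$-bounded in $G^r$, the identity gives $d_G(u,v)\le r\cdot d_{G^r}(u,v)\le r f_0(r)$ for every monochromatic pair $u,v$, so the very same coloring is $f(r)$-bounded in $G$ with $f(r) = r\,f_0(r)$. Conversely, from a $(d+1)$-coloring that is $f(r)$-bounded in $G$, the identity yields $d_{G^r}(u,v)=\ceil{d_G(u,v)/r}\le \ceil{f(r)/r}\le f(r)$ for every monochromatic pair, recovering the $G^r$-boundedness and hence $\asdim(\sC)\le d$.

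The main point to get right --- and what the footnote is flagging --- is that $f$ only becomes a genuine control function once the diameters are measured in $G$: the rescaling $f(r)=r\,f_0(r)$ is exactly what forces $f$ to grow with the scale $r$, as a control function must. I would therefore check that the rescaled function retains the regularity required of a control function (it may be replaced by $r\mapsto \max_{s\le r} s\,f_0(s)$ to make it nondecreasing if needed), and handle the degenerate case of the weak-diameter convention: two vertices in different components of $G$ are at infinite $G$-distance but can never lie on a common monochromatic path of $G^r$, so they impose no constraint, and the bound $\ceil{f(r)/r}\le f(r)$ for $r\ge 1$ takes care of the ceiling in the converse direction. These are the only places where care is needed; the rest is the bookkeeping above.
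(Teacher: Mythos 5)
Your proposal is correct and matches what the paper does: the paper gives no proof of this proposition, but cites \cite[Proposition~1.17]{BBEGx24} and notes in a footnote that the only difference is whether monochromatic components of $G^r$ are weak-diameter-bounded in $G^r$ or in $G$, with the implicit understanding that the two are equivalent after rescaling the bound by $r$. Your explicit argument via the identity $d_{G^r}(u,v)=\ceil{d_G(u,v)/r}$ (giving $f(r)=r\,f_0(r)$ in one direction and $\ceil{f(r)/r}\le f(r)$ in the other, with the same coloring) is exactly that rescaling, carried out correctly.
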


% Observe that, by definition, monochromatic connected components with the same color in $G^r$ are pairwise at distance $>r$ in $G$. This is a key feature of graphs of asymptotic dimension~$d$, as it implies that any LOCAL algorithm that runs in~$r$ can be in conflict only with at most~$d$ ...

Any associated control function $f$ that makes $\asdim(\sC) = d$ is also called a $d$-dimensional control function of $\sC$. The Assouad-Nagata dimension of $\sC$ has a similar definition, except that the control function must be linear. A crude example is that paths have Assouad-Nagata dimension~$1$, because, for each $r\ge 1$, it suffices to assign to each path $P$ a color with each segment of $r$ consecutive vertices, and continuing by alternating with two colors. As two segments of the same color are at distance $>r$ in $P$, monochromatic connected components of $P^r$ have weak diameter in $P$ at most $r-1$ (corresponding to the diameter in $P$ of each segment). In other words, $P^r$ is $2$-colorable $(r-1)$-bounded in $P$, and so $f(r) = r-1$ is a $1$-dimensional control function of the class of paths.

%%%%%%%%%%%%%%%%%%%%%%%%%%%%%%%%%%%%%%%%%%%%%%%%%%%%%%%%%%%%%%%%%%%%%%%%%

\section{Tools and main results}\label{sec:main}

Our algorithm was inspired by the following intriguing property~\cite{BGPW25}. Roughly speaking, any approximate algorithm $\sfA$ is going to perform well in a target graph class $\sD$ that is locally-$\sC$, as long as $\sfA$ is $k$-uniform on $\sC$ and $\sD$ has small asymptotic dimension. 

\begin{restatable}[{\cite[Proposition~3.1]{BGPW25}}]{proposition}{PropI}\label{prop:local_approx}
    Let $\sfA$ be a $k$-uniform $\alpha$-approximation \LOCAL algorithm for \minDS in a hereditary class of graphs $\sC$ with round complexity $r \ge 1$. Let $\sD$ be a graph class with $d$-dimensional control function $f$, and that is\footnote{The original statement claimed erroneously that $\sD$ being $t$-locally-$\sC$ for $t = f(2k+1)$ is enough, instead of the correct $t = f(2k+2) + r$. This has no impact, as in practice we use \cref{prop:local_approx} in \cref{th:locally_bounded_genus} only for $k=7$.} $(f(2k+2)+r)$-locally-$\sC$. Then $\sfA$ is also an $\alpha (d+1)$-approximation algorithm on $\sD$. 
\end{restatable}

We extend \cref{prop:local_approx} to allow for locally constrained errors.

%\tim{I think we also should say $k\geq 0$. We need it for the proof, otherwise $\sfB$ is $1$-uniform and not $0$-uniform. Or in generality, we can say that $\sfB$ is $\max\set{1,k}$-uniform, but that's ugly.} \cyril{I make a comment in its definition.}
\begin{restatable}{proposition}{LocalToGlobalWithErrors}\label{prop:local_approx_w_errors}
    Let $\sfA$ be a $k$-uniform $\alpha$-approximation \LOCAL algorithm for \minDS in a hereditary class of graphs $\sC$ with round complexity $r$. %\cyril{I would write "... with round complexity $r$." So, do we need in the proof that $r\neq 0$? The new proof does not requires $r\ge 1$ anymore.}
    Let $\sD$ be a graph class with a $d$-dimensional control function $f$. Then, there exists a $k$-uniform $(\alpha (d+1)+1)$-approximation algorithm $\sfB$ on $\sD$ with round complexity $T+\delta+2$, where $T = f(2k+2) + \max\set{k+1,r}$ and where $\delta$ is the maximum weak diameter in any $G\in\sD$ of any connected component of $G[N^2[X]]$, $X$ being the $T$-errors of $G$ w.r.t. $\sC$. %\cyril{It would be nice to say that $\sfB$ is also $k'$-uniform for some $k'$, since we could re-apply it (as long as it is hereditary and still of small asdim.}
    Furthermore, if $\sD$ is $T$-locally-$\sC$ (i.e., there is no $T$-errors), then the approximation ratio of $\sfB$ is $\alpha (d+1)$.
    %\cyril{I suggest to add: ``Furthermore, if $\sD$ is $T$-locally-$\sC$ (i.e., $X = \emptyset$), then the approximation ratio of $\sfB$ is $\alpha (d+1)$.'' This allows us to catch for free the uniform property of $\sfB$ if $\sD$ is $T$-locally-$\sC$.}
\end{restatable}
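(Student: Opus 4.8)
My plan is to build $\sfB$ in two phases: first handle the ``error'' region around $X$ by brute force, then apply the uniform algorithm $\sfA$ to the rest via a coloring that exploits the $d$-dimensional control function. I would first note that every vertex can, in $T = f(2k+2)+\max\set{k+1,r}$ rounds, inspect its radius-$T$ ball and thereby decide locally whether it lies in the $T$-error set $X$, and more generally learn the structure of $G[N^{2}[X]]$ around it. Since each connected component $Q$ of $G[N^{2}[X]]$ has weak diameter at most $\delta$ in $G$, any vertex in $Q$ can gather the whole of $Q$ (together with a small collar of its neighbors dominating it) in $\delta + O(1)$ rounds; I would then let $\sfB$ compute, by brute force inside each such component, an \emph{optimal} dominating set $D_X$ for the vertices of $N[X]$. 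The cost of this is at most one extra unit in the approximation ratio: the optimum for the whole graph must already dominate every vertex of $N[X]$, so $\card{D_X}\le \MDS(G)$ restricted to the error region, contributing the ``$+1$'' in $\alpha(d+1)+1$.

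For the second phase, I would apply \cref{prop:asdim_def} to the power graph $G^{r'}$ for a suitable radius $r' = 2k+2$: this yields a $(d+1)$-coloring that is $f(r')$-bounded in $G$, so every monochromatic component of $G^{r'}$ has weak diameter at most $f(r')$ in $G$. Working color class by color class, exactly as in the proof of \cref{prop:local_approx}, each monochromatic cluster is locally contained in the class $\sC$ (this is where the $T$-locally-$\sC$ hypothesis, or its error-restricted analogue, is invoked so that $\sfA$ is applicable), and I would run $\sfA$ on each cluster to dominate the non-error vertices it ``owns.'' Summing the $k$-uniform guarantee $\card{\sfA\cap S}\le \alpha\cdot\MDS(G,N^{k}[S])$ over the $d+1$ color classes gives the factor $\alpha(d+1)$, just as in \cref{prop:local_approx}; the uniformity is precisely what prevents double-counting across color classes, since the balls $N^k[\cdot]$ charged by distinct clusters of the same color are far apart and their minimum dominating sets add up to at most $\MDS(G)$.

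The final output is $\sfB(G) = D_X \cup (\text{the union of the } \sfA\text{-outputs})$. To bound its size, I would split an optimum solution $\MDS(G)$ into the part dominating $N[X]$ and the part dominating the rest, charging the former against $D_X$ (giving $+1$) and the latter against the color-class sum (giving $\alpha(d+1)$). When $X=\emptyset$, the first phase is vacuous, $D_X=\emptyset$, and the bound collapses to $\alpha(d+1)$, which is exactly the stated refinement recovering \cref{prop:local_approx}. The round complexity is $T$ for the error detection and coloring, $\delta$ for gathering each error component, plus a constant $+2$ for the bookkeeping of running $\sfA$ and taking the union, matching the claimed $T+\delta+2$.

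The main obstacle I anticipate is the careful accounting that keeps the two phases from interfering: I must ensure that the error region handled optimally in phase one and the region dominated by $\sfA$ in phase two partition the vertices to be dominated \emph{without} forcing the optimum to be charged twice. Concretely, the delicate point is verifying that when I restrict $\sfA$ to the non-error vertices, the relevant induced subgraphs $G[N^{T}[u]]$ genuinely lie in $\sC$ (so that $\sfA$'s guarantee applies and $\sfA$ never ``sees'' an error vertex whose neighborhood escapes $\sC$), while simultaneously confirming that the choice $T = f(2k+2)+\max\set{k+1,r}$ is large enough that each vertex has gathered all the information both to run $\sfA$ locally \emph{and} to certify its membership in or out of $X$. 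Getting the radius $T$ exactly right, and checking that the uniformity radius $k$ and the control-function argument $2k+2$ are consistent between the two phases, is the crux of the proof.
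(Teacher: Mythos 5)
Your overall architecture matches the paper's: the $(d+1)$-coloring of $G^{2k+2}$ supplied by the control function yields the $\alpha(d+1)$ factor via the $k$-uniform guarantee summed over far-apart monochromatic clusters, and a brute-force step confined to $N^2[X]$ supplies the $+1$. But there are two genuine gaps. First, the coloring can only be an \emph{analysis} device: \cref{prop:asdim_def} asserts its existence, not that a \LOCAL algorithm can compute it, so you cannot literally ``run $\sfA$ on each cluster'' or spend rounds on ``the coloring.'' In the paper, $\sfA$ is run once, globally, on $G$ for $r$ rounds, its output is intersected with $V(G)\setminus X$, and the clusters are used only afterwards to partition $\sfA(G)\setminus X$ in the counting. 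Second --- and this is exactly the crux you flag but do not resolve --- $\sfA$'s guarantee is only valid on members of $\sC$, while $G\notin\sC$ in general. The missing step is: for a monochromatic cluster $C\not\subseteq X$ and any $v\in C\setminus X$, the ball $G_v=G[N^T[v]]$ lies in $\sC$ and contains $N^t[C]$ for every $t\le\max\set{k+1,r}$ (this is precisely why $T=f(2k+2)+\max\set{k+1,r}$), so vertices of $C$ have identical radius-$r$ views in $G$ and in $G_v$; hence $\sfA(G)\cap C=\sfA(G_v)\cap C$ and $\MDS(G,N^k[C])=\MDS(G_v,N^k[C])$, which is what transfers the uniform bound from $G_v$ to $G$. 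Without this simulation argument the factor $\alpha(d+1)$ is unjustified.

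Two smaller omissions: you never argue that the $\sfA$-phase actually dominates all of $V(G)\setminus N[X]$, which is needed so that the residual brute-force can be confined to components of $G[N^2[X]]$ and finish in $\delta+O(1)$ rounds (the paper's argument: every vertex outside $N[X]$ has its whole closed neighborhood disjoint from $X$, where $\sfA$ behaves as on a $\sC$-graph, so it gets dominated by $\sfA(G)\setminus X$); and you do not prove that $\sfB$ itself is $k$-uniform, which is part of the statement and is used downstream in \cref{th:locally_bounded_genus}.
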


The proof is an adaptation of that of Proposition~\ref{prop:local_approx}, and we include it in~\cref{sec:propwitherrors}. However, we already describe the corresponding algorithm $\sfB$.

\begin{algorithm}
\caption{Generic algorithm $\sfB$ from \cref{prop:local_approx_w_errors}.}\label{algo:MDS}
\begin{algorithmic}[1]

\Require A $k$-uniform $\alpha$-approximate \LOCAL algorithm $\sfA$ for \minDS in a hereditary class of graphs $\sC$ with round complexity $r$, a graph class $\sD$ with a $d$-dimensional control function $f$, $G\in\sD$ with $T$-errors $X$ where $T = f(2k+2) + \max\set{k+1,r}$. %\cyril{I think it is more clear that we speak in term of "$d$-dimensional control function $f$" when we care about about $f$, and of "asymptotic dimension $d$" when we do not care about $f$. In fact here (and in the above proposition) we really don't care about $d$ for the input. For the input we care about the control function. Whereas we care of $d$ in the output (approx ratio).}

\Ensure A dominating set $S$ of $G$ such that $\card{S} \le (\alpha (d+1)+1) \cdot \MDS(G)$.

\State $S \gets \emptyset$

\State Each vertex $u$ computes $G[N^T[u]]$ and checks whether it belongs to $\sC$; as a consequence, it decides whether it belongs to $X$. 

%\cyril{I've removed "the next" in "for the next $r$ rounds". this is because to can run in parallel Step~2 and Step~3 to get a slighlty better running time.}
\State Each vertex $u$ runs $\sfA$ for $r$ rounds, and gets added to $S$ only if $u\in \sfA(G)\setminus X$.

\State $S \gets S \cup S'$, where $S'$ is a brute-forced minimum set of $G$ that dominates $V(G)\setminus N[S]$.

\end{algorithmic}
\end{algorithm}

To illustrate \cref{prop:local_approx_w_errors}, assume that $\sC$ and $\sD$ are respectively the class of planar and Euler genus-$g$ graphs, for a given $g\ge 1$. Moreover, assume that we are given a \LOCAL algorithm $\sfA$ with round complexity $r$ for \minDS that is a $k$-uniform $\alpha$-approximation on $\sC$, where $k,\alpha,r$ are small constants.

Euler genus-$g$ graphs exclude $K_{3,2g+3}$, and thus have asymptotic dimension at most two~\cite[Theorem~1.9]{BBEGx24}, and even Assouad-Nagata dimension at most two. So, there is a $2$-dimensional control function $f(x) \le c(g) \cdot x$, for some function $c$.

Consider a graph $G \in\sD$ with $T$-error set $X$, where $T = f(2k+2) + \max\set{k+1,r} = O_g(1)$. 

According to \cref{prop:local_approx_w_errors} applied on $\sC$, $\sD$ and $\sfA$, Algorithm $\sfB$ has approximate ratio $2(\alpha+1)+1 = O(1)$ for $G$ and has round complexity $T + \delta + 2 = O_g(1) + \delta$. To conclude that the round complexity of $\sfB$ is in $O_g(1)$, we show that:

\begin{claim}\label{claim:genus_additivity}
     $\delta < g\cdot(2T+5)$, for any $g\ge 1$.
\end{claim}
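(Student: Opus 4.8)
The plan is to prove the bound directly by exhibiting many pairwise vertex-disjoint non-planar subgraphs of $G$ and invoking the additivity of Euler genus. The one external fact I rely on is that Euler genus is monotone under taking subgraphs and additive over disjoint unions; consequently, if $G$ contains $m$ pairwise vertex-disjoint non-planar subgraphs, then $\mathrm{eg}(G)\ge m$. By the definition of the $T$-error set $X$ with respect to the class $\sC$ of planar graphs, every $x\in X$ satisfies that $G[N^T[x]]$ is non-planar, so each ball $G[N^T[x]]$ already contributes one unit of genus; two such balls are vertex-disjoint as soon as their centres are at distance more than $2T$ in $G$. The whole argument is then a packing computation turning a large weak diameter into many well-separated centres in $X$.

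Concretely, let $C$ be a connected component of $G[N^2[X]]$ of maximum weak diameter, and let $a,b\in C$ realise it, so $d_G(a,b)=\delta$. Since $C$ is connected in $G[N^2[X]]$, fix a path $Q=(a=q_0,q_1,\dots,q_\ell=b)$ inside $G[N^2[X]]$. Consecutive vertices of $Q$ are adjacent in $G$, so the map $i\mapsto d_G(a,q_i)$ starts at $0$, ends at $\delta$, and changes by at most one at each step; hence it attains every value in $\range{0}{\delta}$. For each multiple $t\in\set{0,\,2T+5,\,2(2T+5),\dots}$ with $t\le\delta$, I select a vertex $q_{i(t)}$ on $Q$ with $d_G(a,q_{i(t)})=t$, obtaining $m:=\floor{\delta/(2T+5)}+1$ milestones. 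Anchoring the distance profile at the weak-diameter endpoint $a$ is exactly what lets me sweep the full interval $\range{0}{\delta}$ rather than leaking an additive constant.

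Each $q_{i(t)}$ lies in $N^2[X]$, so I pick $x_t\in X$ with $d_G(q_{i(t)},x_t)\le 2$. For distinct milestones $t\ne t'$, the reverse triangle inequality through the common anchor $a$ gives $d_G(q_{i(t)},q_{i(t')})\ge\card{t-t'}\ge 2T+5$, whence $d_G(x_t,x_{t'})\ge (2T+5)-2-2=2T+1>2T$. Thus the balls $N^T[x_t]$ are pairwise vertex-disjoint, and each $G[N^T[x_t]]$ is non-planar. This gives $m$ pairwise vertex-disjoint non-planar subgraphs of $G$, so by additivity and monotonicity of Euler genus, $g\ge\mathrm{eg}(G)\ge m=\floor{\delta/(2T+5)}+1$. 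Rearranging yields $\floor{\delta/(2T+5)}\le g-1$, i.e. $\delta/(2T+5)<g$, which is precisely $\delta<g\,(2T+5)$; since $C$ was chosen of maximum weak diameter, this bounds $\delta$ for all components.

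The part to get right is the constant bookkeeping, where a careless argument leaks in two places: (i) the spacing $2T+5$ must be large enough that, after paying the $\le 2$ slack on each side for moving from a path vertex $q_{i(t)}$ to a genuine centre $x_t\in X$, the centres stay more than $2T$ apart, so that the radius-$T$ balls are genuinely disjoint; and (ii) the counting must be anchored at an endpoint realising $\delta$, so that the distance profile along $Q$ covers all of $\range{0}{\delta}$ and no additive term is lost. The only genuinely external ingredient is the additivity of Euler genus over vertex-disjoint subgraphs, which I expect is the statement worth citing carefully; the remainder is triangle-inequality packing.
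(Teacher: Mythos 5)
Your proof is correct and follows essentially the same route as the paper's: select vertices on a path realising the weak diameter at exact distances that are multiples of $2T+5$ from an endpoint, move each to a nearby vertex of $X$ at cost $2$, check the resulting radius-$T$ balls are pairwise disjoint and non-planar, and conclude by additivity of the Euler genus. The only cosmetic difference is that you argue directly via the count $\lfloor \delta/(2T+5)\rfloor + 1$ whereas the paper argues by contradiction with $g+1$ milestones; the constants and the $2T+1$ separation bound coincide.
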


\begin{proof}
    To derive a contradiction, consider a connected component $H$ of $G[N^2[X]]$, and assume that $H$ has weak diameter $\delta \ge g \cdot (2T+5)$ in $G$. As $g\ge 1$, $X$ and $H$ are not empty. So, $H$ must contain some path $P$ such that the distance in $G$ between its endpoints, say $s$ to $t$, is $\delta$. In particular, for each $d \in\set{0,1,\dots,\delta}$, there must exist a vertex of $P$ that is at distance exactly $d$ in $G$ from~$s$. This is because the distance in $G$ from $s$, when moving along an edge of~$P$, is a function that can vary by at most one unit, and this distance goes from $0$ (at $s$) to $\delta$ (at $t$). For every $i \in\set{0,1,\dots,g}$, let $u_i$ be any vertex of $P$ at distance exactly $d_i =  i\cdot (2T+5)$ in $G$ from $s$. So $u_0 = s$, $u_g = t$, and all the $u_i$'s exist in $P$ (so in $H$) since $d_i \in\set{0,1,\dots,\delta}$. By definition of $H$, each $u_i$ intersects $N^2[X]$. So, for each $u_i$ one can select a vertex $x_i\in X$ at distance in $G$ at most~2 from $u_i$. By the triangle inequality, for all $0\le i<j\le g$, $x_i$ and $x_j$ are at distance in $G$ at least $(d_j-2) - (d_i+2) = (j-i) \cdot(2T+5) - 4 \ge 2T+1$. It follows that $N^T[x_i]$ and $N^T[x_j]$ are disjoint. By definition of $X$, the subgraphs $G[N^T[x_i]]$'s are not planar, thus of Euler genus $\ge 1$. By additivity of the Euler genera
    % Cyril{genera = plurial of genus.}
    of these $g+1$ pairwise disjoint subgraphs of $G$ (cf.~\cite[Theorem~4.4.3]{MT01}),
    % Cyril{In fact this theorem says that the Euler genus of a graph is the sum of its Euler genera of its blocks. So deleting edges in $G$, we can obtain one block of each subgraph.}
    we get a contradiction that $G$ has Euler genus $g$. Thus $\delta < g\cdot (2T+5)$ as claimed.
\end{proof}

We remark that \cref{claim:genus_additivity} is the only place where Euler genus is used. From the above discussion, to prove our approximation result on Euler genus-$g$ graphs (\cref{th:bounded_genus}), we simply need to design a $O(1)$-uniform $O(1)$-approximation algorithm $\sfA$ for planar graphs. In other words, we have made a reduction of $3\alpha$-approximation for Euler genus-$g$ graphs from $O(1)$-uniform $\alpha$-approximation algorithm for planar graphs.

\begin{restatable}{proposition}{PropUniformPlanar}\label{prop:uniform_planar}
    There is a $4$-uniform $\RATIOP$-approximation \LOCAL algorithm $\sfA$ for \minDS in planar graphs with round complexity~$5$.
\end{restatable}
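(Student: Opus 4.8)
The plan is to exhibit a concrete $5$-round selection rule and then verify the uniform guarantee by a charging argument governed by Euler's formula. In the first $5$ rounds every vertex $v$ learns the graph induced on its ball $N^5[v]$ together with all identifiers, which is exactly what a $5$-round \LOCAL algorithm can do. From this information $v$ evaluates a purely local rule producing a set $D := \sfA(G)$. I would use the classical two-phase planar scheme: in a first phase each vertex $w$ points to the vertex $h(w)$ of largest $(\deg,\mathrm{ID})$ pair in $N[w]$, and a vertex is declared a \emph{dominator} when it is pointed to by a neighbor whose closed neighborhood it entirely covers; in a second phase the vertices still undominated (which, by planarity, sit in sparse, locally tree-like configurations) are cleaned up by a bounded local choice. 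Both phases read only a constant-radius neighborhood, so they fit within $5$ rounds, and by construction every vertex ends up dominated, that is, $D$ is a dominating set of~$G$.

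\emph{The uniform bound.} Fix $S \subseteq V(G)$ and an optimal set $D_S$ with $\card{D_S} = \MDS(G, N^4[S])$, so that $N^4[S] \subseteq N[D_S]$; I must show $\card{D \cap S} \le \RATIOP \cdot \card{D_S}$. The key point is that the rule selects a vertex $v$ only because of some \emph{witness} $w \in N^2[v]$ that $v$ covers. Since $v \in S$, this witness lies in $N^2[S] \subseteq N^4[S]$ and is therefore dominated by some $d \in D_S$. Charging $v$ to such a $d$ places every charged pair within a bounded radius of $d$, and the tie-breaking built into the rule forces $\deg(v) \ge \deg(d)$ whenever $v$ and $d$ compete for the same witness. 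It then remains to bound, for a fixed $d$, the number of selected vertices charged to it.

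\emph{Main obstacle.} The whole difficulty, and the origin of the explicit constant $\RATIOP$, is this last count. I would build an auxiliary graph on the charged vertices together with their witnesses inside $N^{O(1)}[d]$, argue that it is planar (or planar after contracting short paths, which is where the radius-$4$ slack in $N^4[S]$ and the $5$-round reach are needed, so that each contraction stays inside the region controlled by $D_S$), and then combine the planar bipartite bound $\card{E} \le 2\card{V} - 4$ with the degree inequality $\deg(v) \ge \deg(d)$ to show that only boundedly many selected vertices can cluster around a single~$d$. Summing the contributions of the two phases and optimizing the discharging constants yields $\card{D \cap S} \le \RATIOP \cdot \card{D_S}$, which is exactly the $4$-uniform $\RATIOP$-approximation claimed in \cref{prop:uniform_planar}, the $5$ rounds being those spent gathering the balls. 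I expect the delicate parts to be verifying planarity of the auxiliary charging graph and checking that no witness is ever charged to a dominator lying outside the range permitted by $k=4$, since these two facts are precisely what make the constant finite and the parameters $(k,r) = (4,5)$ sufficient.
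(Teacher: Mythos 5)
Your proposal takes a genuinely different route from the paper's. The paper's algorithm $\sfA$ has each vertex $u$ brute-force a canonical (``best'') minimum dominating set $D_u$ of $N^3[u]$ inside $G[N^4[u]]$ and then add to $D$ the smallest-label vertex of $D_u\cap N[u]$; the uniform bound is obtained by contracting each vertex of $Y=N^4[S]\setminus X$ onto a dominator in the optimal set $X$, pruning the resulting plane multigraph via Euler's formula, and exploiting the fact that inside a region cut off by a separator $C$ of radius at most $3$ the canonical set $D_u$ is entirely determined by $D_u\cap C$, so each such region contributes at most $2^{\card{C}}$ candidates for $v_u$; the constant $\RATIOP = 2+12+288$ falls out of this count. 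None of this machinery appears in your sketch, and as written your sketch does not constitute a proof.

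Concretely there are two gaps. First, your algorithm is underspecified: phase $2$ (``cleaned up by a bounded local choice'') is never defined, and the claim that the vertices left undominated after phase $1$ ``sit in sparse, locally tree-like configurations'' is asserted without justification --- with your phase-$1$ rule (select $v$ when some neighbor $w$ points to it and $N[w]\subseteq N[v]$) one can build planar instances where most vertices admit no such witness, so phase $2$ carries essentially the whole approximation burden and cannot be left implicit. Second, and more importantly, the entire quantitative content of the proposition --- bounding, for each $d$ in the optimal set, the number of selected vertices of $S$ charged to $d$ --- is explicitly deferred (``I would build an auxiliary graph \ldots\ to show that only boundedly many selected vertices can cluster around a single $d$''). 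That count is precisely where the difficulty lies: the inequality $\deg(v)\ge\deg(d)$ combined with the planar bipartite bound $\card{E}\le 2\card{V}-4$ does not by itself cap the number of charged vertices per $d$, since arbitrarily many high-degree neighbors of $d$ can coexist in a planar graph once each is given private neighbors, so an additional structural idea (playing the role of the paper's contraction-plus-separator argument) is needed and is not supplied. Finally, the constant $\RATIOP$ is an artifact of the paper's specific count; there is no reason your different, unanalyzed scheme would land on it, so even after filling the gaps you would be proving a statement with a different constant.
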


We present a very short proof of~\cref{prop:uniform_planar} in~\cref{sec:localplanar}. A direct application of \cref{prop:local_approx_w_errors} and of the above discussion, is that Algorithm~$\sfB$ is a $\RATIO$-approximation for Euler genus-$g$ graphs. However, it is possible to obtain a much better ratio for planar graphs, with significantly more work. Although it is not explicitly stated in their paper, we can observe the following.

\begin{observation}\label{obs:localplanar}
 The algorithm described in~\cite[Theorem~2.3]{HKOSV25} is a $7$-uniform $(11+\eps)$-approximation \LOCAL algorithm for \minDS in planar graphs with round complexity $C(\eps)$, for some function $C$.
\end{observation}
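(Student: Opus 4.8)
The plan is to verify the three defining ingredients of a $7$-uniform $(11+\eps)$-approximation directly against the analysis of \cite[Theorem~2.3]{HKOSV25}, rather than re-proving their result. Write $\sfA$ for their algorithm. That $\sfA$ is a \LOCAL algorithm with round complexity $C(\eps)$ and outputs a dominating set is already part of their statement, so the only thing left to establish is the uniform inequality $\card{\sfA(G)\cap S} \le (11+\eps)\cdot\MDS(G,N^7[S])$ for every planar $G$ and every $S\subseteq V(G)$. I would first observe that taking $S=V(G)$ (so that $N^7[S]=V(G)$) recovers exactly their global guarantee $\card{\sfA(G)} \le (11+\eps)\cdot\MDS(G)$; the content of the observation is therefore that their bound \emph{localizes}, and the constant $7$ is meant to record the radius within which this localization is valid.

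The approach is to open up the charging step of their proof. Their approximation bound is obtained by fixing an optimal dominating set $D$ and exhibiting a distribution of the cost of $\sfA(G)$ onto the members of $D$ (a discharging argument carried out on a planar auxiliary structure), so that each $d\in D$ receives total charge at most $11+\eps$. The single property I need to read off their rules is that this charging is \emph{local}: every vertex $v\in\sfA(G)$ sends its charge only to dominators $d\in D$ with $d_G(v,d)\le 7$, equivalently the auxiliary structure only ever links a selected vertex to dominators inside $N^7[v]$. This is plausible because $\sfA$ itself is $O_\eps(1)$-local and the discharging moves charge across a bounded number of faces and edges of the auxiliary planar graph; the constant $7$ should be exactly the worst-case distance in $G$ that such a charge can traverse.

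Granting locality, the uniform inequality follows by replaying the charging with the global $D$ replaced by a set $D^{\ast}$ that is optimal for $N^7[S]$, so that $\card{D^{\ast}}=\MDS(G,N^7[S])$ and $N^7[S]\subseteq N[D^{\ast}]$. I would then restrict attention to the selected vertices lying in $S$: each $v\in\sfA(G)\cap S$ charges only to dominators within distance $7$ of $v$, hence inside $N^7[v]\subseteq N^7[S]$, a region that $D^{\ast}$ already dominates, so the local charging rules apply verbatim with $D^{\ast}$ as the target. Summing, the total charge received is at most $(11+\eps)\card{D^{\ast}}$, while the total charge sent is $\card{\sfA(G)\cap S}$, which yields $\card{\sfA(G)\cap S}\le(11+\eps)\cdot\MDS(G,N^7[S])$, as required. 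Note that since $N^7[S]\subseteq V(G)$ one has $\MDS(G,N^7[S])\le\MDS(G)$, so the uniform statement is genuinely stronger than the global one.

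The hard part will be precisely the locality claim of the second paragraph: \cite[Theorem~2.3]{HKOSV25} states and proves only the global inequality, so one must re-examine their construction in detail to confirm that no charge ever travels more than distance $7$ in $G$, and that the argument is insensitive to replacing the globally optimal $D$ by a set $D^{\ast}$ dominating only the enlarged neighborhood $N^7[S]$. Once the charging rules are checked to be confined to radius-$7$ balls around selected vertices, the localization is pure bookkeeping; verifying that confinement, and that $7$ is indeed the right constant, is the technical heart of the observation.
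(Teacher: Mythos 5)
The paper gives no proof of this observation at all: it is introduced with ``Although it is not explicitly stated in their paper, we can observe the following'' and then simply asserted, so there is no argument of the authors' to compare yours against. Your outline --- take $D^{\ast}$ optimal for $N^7[S]$, replay the charging analysis of \cite[Theorem~2.3]{HKOSV25} restricted to the selected vertices lying in $S$, and use locality of the charging rules to bound the total charge by $(11+\eps)\card{D^{\ast}}$ --- is the natural and almost certainly intended way to substantiate the claim, and the surrounding bookkeeping (that $S=V(G)$ recovers the global bound, that the uniform statement is strictly stronger, that it suffices for each selected $v\in S$ to charge only inside $N^7[v]\subseteq N^7[S]$) is correct.

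That said, what you have written is a verification plan, not a proof. The decisive step --- that in the analysis of \cite[Theorem~2.3]{HKOSV25} no charge ever travels more than distance $7$ in $G$, and that the argument is insensitive to replacing the globally optimal dominating set by one that dominates only $N^7[S]$ --- is exactly the content of the observation, and you explicitly defer it (``one must re-examine their construction in detail''). Until that inspection is carried out, the constant $7$ is unexplained and the observation is unestablished; note also that you would need to check not just the charging but that the \emph{selection rule} of their algorithm, when applied to a vertex of $S$, depends only on a bounded-radius neighborhood compatible with replacing $D$ by $D^{\ast}$. To be fair, the paper is in the same position --- it asserts the observation without performing this inspection --- so you are at parity with the authors, but neither text is a self-contained justification, and the missing locality check is the genuine gap in your proposal.
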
%TODO : vérifier 7 + regarder où y'a le epsilon

From~\cref{obs:localplanar} and~\cref{prop:local_approx_w_errors}, and from the above discussion, we immediately obtain our first main result:

\begin{theorem}\label{th:bounded_genus}
    For any $\eps>0$, for any $g$, there is a $7$-uniform $(\RATIOG +\eps)$-approximation \LOCAL algorithm for \minDS in Euler genus-$g$ graphs with round complexity $C(\eps,g)$ for some function $C$.
\end{theorem}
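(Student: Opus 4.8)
The plan is to combine \cref{obs:localplanar}, \cref{prop:local_approx_w_errors}, and \cref{claim:genus_additivity} exactly along the lines sketched in the discussion preceding the theorem. I would instantiate $\sC$ as the class of planar graphs and $\sD$ as the class of Euler genus-$g$ graphs. The starting point is the algorithm $\sfA$ from \cref{obs:localplanar}, which is a $7$-uniform $(11+\eps')$-approximation \LOCAL algorithm for MDS in planar graphs with round complexity $r = C(\eps')$ for some function $C$. Here I would keep a separate parameter $\eps'$ for the planar subroutine, to be tuned at the end against the target $\eps$.

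First I would record that Euler genus-$g$ graphs exclude $K_{3,2g+3}$ as a minor, hence by \cite[Theorem~1.9]{BBEGx24} have asymptotic dimension at most two, and in fact admit a \emph{linear} $2$-dimensional control function $f(x) \le c(g)\cdot x$ for some function $c$. Applying \cref{prop:local_approx_w_errors} with $k = 7$, $d = 2$, and $\alpha = 11+\eps'$ then yields a $7$-uniform algorithm $\sfB$ on $\sD$ with approximation ratio $\alpha(d+1)+1 = 3(11+\eps')+1 = 34 + 3\eps'$ and round complexity $T + \delta + 2$, where $T = f(2k+2) + \max\set{k+1,r} = f(16) + \max\set{8,\, C(\eps')\} = O_g(\eps'^{-1})$ is a finite function of $g$ and $\eps'$, and $\delta$ is the maximum weak diameter in $G$ of a connected component of $G[N^2[X]]$, with $X$ the set of $T$-errors.

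The crucial bound is \cref{claim:genus_additivity}, which gives $\delta < g\cdot(2T+5)$ for every $g \ge 1$, using precisely the additivity of Euler genus over disjoint subgraphs. Since $T$ is finite and depends only on $g$ and $\eps'$, this makes the round complexity $T + \delta + 2 < T + g\cdot(2T+5) + 2$ a finite function of $g$ and $\eps'$, which I would fold into a single function $C(\eps,g)$ after the final choice of $\eps'$. I do not expect any real obstacle here: every ingredient has been assembled in the preceding discussion, and the claim already does the only genus-specific work.

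The last step is purely arithmetic: I want the ratio $34 + 3\eps'$ to fit the claimed form $34 + \eps$, so I would simply set $\eps' = \eps/3$. With this substitution the approximation ratio becomes $34 + \eps$, the uniformity parameter stays $k = 7$, and the round complexity $T + \delta + 2$ becomes a finite function of $\eps$ and $g$, which I name $C(\eps,g)$, completing the proof. The only point requiring mild care is checking that the planar subroutine's $7$-uniformity (from \cref{obs:localplanar}) matches the hypothesis $k = 7$ required to apply \cref{prop:local_approx_w_errors} with the claimed $T$, and that no hidden dependence forces $r$ above the $\max\set{k+1,r}$ threshold in an unbounded way — but since $r = C(\eps')$ is finite for fixed $\eps'$, this causes no trouble.
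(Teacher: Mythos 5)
Your proposal is correct and follows essentially the same route as the paper: instantiating \cref{prop:local_approx_w_errors} with $\sC$ the planar graphs, $\sD$ the Euler genus-$g$ graphs (asymptotic dimension at most $2$), the algorithm of \cref{obs:localplanar} as $\sfA$, and bounding $\delta$ via \cref{claim:genus_additivity}, with the final rescaling $\eps' = \eps/3$ giving $3(11+\eps')+1 = \RATIOG+\eps$. This is exactly the ``above discussion'' that the paper's one-line proof of \cref{th:bounded_genus} refers to.
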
%TODO : vérifier fonction de contrôle, est-ce que c'est bien linéaire ?
%TODO: est-ce bien $O(\eps^{-1})$-uniform?

We can even go one step further by applying \cref{prop:local_approx} to the uniform approximation \LOCAL algorithm derived from~\cref{th:bounded_genus}, and the class $\sC$ of Euler genus-$g$ graphs. We immediately obtain our second main result:

\begin{theorem}\label{th:locally_bounded_genus}
    Let $\sC$ be the class of Euler genus-$g$ graphs, and $\sD$ be a graph class of asymptotic dimension $d$ that is $r$-locally-$\sC$ for some $r$ large enough. Then, there is a $\RATIOGO (d+1)$-approximation \LOCAL algorithm for \minDS in $\sD$ with round complexity $C(d,g,r)$ for some function $C$.
\end{theorem}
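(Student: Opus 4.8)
The plan is to chain together the two results already proved in the paper, treating \cref{th:bounded_genus} as a black-box subroutine. The key observation is that \cref{th:bounded_genus} does not merely give an approximation algorithm for Euler genus-$g$ graphs; it gives a \emph{$7$-uniform} one. This uniformity is precisely the hypothesis needed to feed it into \cref{prop:local_approx}, whose statement takes a $k$-uniform $\alpha$-approximation on a hereditary class $\sC$ and lifts it to an $\alpha(d+1)$-approximation on any class $\sD$ of asymptotic dimension $d$ that is locally-$\sC$.

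First I would set $\sC$ to be the class of Euler genus-$g$ graphs and invoke \cref{th:bounded_genus} to obtain, for a fixed $\eps>0$, a $7$-uniform $(\RATIOG+\eps)$-approximation \LOCAL algorithm $\sfA$ for \minDS on $\sC$, with round complexity $r = C(\eps,g)$. I must first check that $\sC$ is hereditary: Euler genus is minor-monotone, hence closed under vertex deletion, so this holds. Next, by hypothesis $\sD$ has asymptotic dimension $d$, so by \cref{prop:asdim_def} it admits a $d$-dimensional control function $f$. With $k=7$, the locality threshold demanded by \cref{prop:local_approx} is $t = f(2k+2)+r = f(16)+C(\eps,g)$; this is a constant depending only on $d$, $g$, and $\eps$. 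The phrase ``$r$-locally-$\sC$ for some $r$ large enough'' in the statement is exactly the promise that $\sD$ is $t$-locally-$\sC$ for this particular threshold, so the hypothesis of \cref{prop:local_approx} is met.

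Applying \cref{prop:local_approx} then yields directly that $\sfA$ is an $(\RATIOG+\eps)(d+1)$-approximation on $\sD$. Choosing $\eps$ small enough (e.g. $\eps \le 1/(d+1)$) absorbs the $\eps(d+1)$ term and gives the claimed ratio $\RATIOGO(d+1)$, where $\RATIOGO = \RATIOG+1$. The round complexity is inherited from $\sfA$, namely $C(\eps,g)$; since $\eps$ is fixed as a function of $d$, this is of the form $C(d,g,r)$ (the dependence on $r$ enters only through the requirement that $\sD$ be $t$-locally-$\sC$, which forces $t \le r$).

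The main obstacle I anticipate is purely bookkeeping rather than conceptual: one must carefully track the constants to confirm that $\RATIOGO(d+1)$ is attainable after the $\eps$-absorption, i.e. that $(\RATIOG+\eps)(d+1) \le \RATIOGO(d+1)$ for the chosen $\eps$, and that the ``$r$ large enough'' in the theorem statement matches $f(16)+C(\eps,g)$ so that \cref{prop:local_approx} applies verbatim. There is no genuine analytic difficulty here, since all the heavy lifting—the uniformity analysis of the planar algorithm, the genus-additivity argument of \cref{claim:genus_additivity}, and the local-to-global lifting—has already been carried out in the earlier results we are permitted to assume.
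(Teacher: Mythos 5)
Your proposal is correct and follows exactly the paper's route: the paper proves this theorem precisely by feeding the $7$-uniform $(\RATIOG+\eps)$-approximation algorithm of \cref{th:bounded_genus} into \cref{prop:local_approx} with $\sC$ the (hereditary) class of Euler genus-$g$ graphs, and absorbing $\eps$ into the constant to get $\RATIOGO(d+1)$. Your filled-in details (hereditariness of $\sC$, the locality threshold $f(16)+r$, and the choice of $\eps$) are all consistent with what the paper leaves implicit.
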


%\cyril{I prefer to round to $35(d+1)$ instead of putting $O(d)$ or $(34+\eps) (d+1) = (34+\eps)d + 34 + \eps$ and add "for every $\eps$ ..."}\marthe{Yeah agree}

% Cyril: we have $d\ge 2$. Assume $\eps=0.1$, then the approx ratio is $34.1 d + 34.1$. $35d = (34+\eps)d + d-\eps = 34d + d-0.1$.

%%%%%%%%%%%%%%%%%%%%%%%%%%%%%%%%%%%%%%%%%%%%%%%%%%%%%%%%%%%%%%%%%%%%%%%%%

\section{Self-contained proof of Proposition~\ref{prop:local_approx_w_errors}}\label{sec:propwitherrors}

\LocalToGlobalWithErrors*

\begin{proof}
    %\cyril{For setminus operator, please choose between $X\setminus Y$ (that I prefer) and $X-Y$. Please check and uniformize in all the paper. }\tim{$\setminus$ is for sets, $-$ is for graphs.}
    Let $G\in \sD$. As $\sD$ has $d$-dimensional control function $f$, the graph $G^{2k+2}$ admits a $(d+1)$-coloring $f(2k+2)$-bounded in $G$. Fix such a coloring, and, for each $i\in\set{0,1,\dots,d}$, let $C_i$ be the set of color-$i$ vertices in $G^{2k+2}$ (and also in $G$). We denote by $\CC_i$ the set of connected components of $G^{2k+2}[C_i]$. 
    So, by definition, all components of $\CC_i$ have weak diameter in $G$ at most $f(2k+2)$, and are pairwise at distance at least $2k+3$ in $G$ (because distinct components of $\CC_i$ cannot be adjacent in $G^{2k+2}$).
    
    We run Algorithm~$\sfB$ on $G$. It is easy to check that $\sfB(G)$ is a valid dominating set for $G$. By definition, $\card{\sfB(G)} \le |S| + |S'|$, where $S = \sfA(G)\setminus X$ (Step~3) and $S'$ is a brute-forced minimum set dominating $V(G)\setminus N[S]$ (Step~4). Clearly, $\card{S'} \le \MDS(G)$, and $\card{S'} = 0$ if there is no $T$-errors ($X=\emptyset$). Moreover, by definition of the coloring, $\card{\sfA(G)\setminus X} = \sum_{i=0}^d \card{(\sfA(G)\cap C_i)\setminus X}$. In other words,
    \begin{equation}\label{eq:B(G)}
    \card{\sfB(G)} ~\le~ \sum_{i=0}^d \card{(\sfA(G)\cap C_i)\setminus X} + \left\{%
        \begin{array}{ll}
        0 &\mbox{if $X=\emptyset$}\\
        \MDS(G) &\mbox{otherwise}
        \end{array}
        \right.
    \end{equation}
    
    To upper bound the term $\card{(\sfA(G)\cap C_i)\setminus X}$, consider some color $i\in\set{0,1,\dots,d}$ and some component $C\in \CC_i$ such that $C\not\subseteq X$. Consider some $v\in C\setminus X$, and denote $G_v = G[N^T[v]]$. Note that, by definition of $X$ and as $v\notin X$, $G_v \in\sC$.

    Along the proof, we will use several times the following fact:

    \begin{fact}\label{fact:dom}
        For any graph $H$ and $W \subseteq V(H)$, to dominate $W$ in $H$, it is enough to select vertices taken from $N[W]$.
    \end{fact}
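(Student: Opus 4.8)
The plan is to prove that $\MDS(H,W)$ is already attained by a set contained in $N[W]$; equivalently, that any set dominating $W$ can be pruned to a subset of $N[W]$ of no larger size that still dominates $W$.

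First I would take an arbitrary $D\subseteq V(H)$ with $W\subseteq N[D]$, i.e.\ a set dominating $W$. The single observation that drives everything is the symmetry of the closed-neighborhood relation: for any two vertices $u,w$ we have $u\in N[w]$ if and only if $w\in N[u]$. Hence a vertex $u$ can contribute to dominating some $w\in W$ only if $u\in N[w]\subseteq N[W]$; contrapositively, every $u\notin N[W]$ satisfies $N[u]\cap W=\emptyset$ and so dominates no vertex of $W$ whatsoever.

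Next I would set $D' = D\cap N[W]$. The deleted vertices all lie in $D\setminus N[W]$ and, by the previous step, dominate nothing in $W$, so we still have $W\subseteq N[D']$; thus $D'$ dominates $W$, and trivially $\card{D'}\le\card{D}$. Applying this to a minimum set dominating $W$ yields a minimum-size dominating set inside $N[W]$, which is precisely the claim that it suffices to select vertices from $N[W]$.

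There is essentially no obstacle here: the whole content is the symmetry of closed neighborhoods, which is exactly what upgrades ``redundant'' to ``useless'' for vertices outside $N[W]$. The remainder is a single-line pruning step with no case analysis, so I would expect the formal write-up to be only a couple of sentences.
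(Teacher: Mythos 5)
Your argument is correct and is exactly the standard pruning-by-symmetry argument: any $u\notin N[W]$ has $N[u]\cap W=\emptyset$, so intersecting a dominating set of $W$ with $N[W]$ preserves domination without increasing size. The paper states this as a Fact without proof, and your write-up supplies precisely the justification it implicitly relies on.
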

    
    In the following, to differentiate the neighborhoods in $G$ and in $G_v$, we write them using either $N_G$ or $N_{G_v}$ notations.
    Notice that, for each non-negative integer $t\le \max\set{k+1,r}$, $G[N_{G}^t[C]]\subseteq G_v$. This is because $C$ has weak diameter in $G$ at most $f(2k+2)$ and thus $G[N_{G}^t[C]]$ has weak diameter in $G$ at most $f(2k+2) + t \le T$, by the choice of $T$. Moreover, as $\sC$ is hereditary, we have that $G[N_{G}^t[C]]\in \sC$.

    As $G[N^t_G[C]]\subseteq G_v$, vertices in $C$ have the same distance $t$-neighborhood in $G$ and $G_v$. In particular, $G[N^t_G[C]] = G_v[N^t_{G_v}[C]]$. From~\cref{fact:dom}, any minimum set of vertices of $G$ that dominates $N^k_G[C]$ is contained in $G[N^{k+1}_G[C]]$, and similarly if we replace $G$ by $G_v$. It follows that $\MDS(G,N^k_{G}[C]) = \MDS(G_v,N^k_{G_v}[C])$. Because Algorithm~$\sfA$ is a $k$-uniform $\alpha$-approximation on $\sC$, and $G_v\in \sC$, we get
    \[
        \card{\sfA(G_v)\cap C} ~\le~ \alpha \cdot \MDS(G_v,N_{G_v}^k[C]) = \alpha \cdot \MDS(G, N^k_G[C]) ~.
    \]
    Since $G[N^r_G[C]] = G_v[N^r_{G_v}[C]]$, Algorithm~$\sfA$, which runs in $r$ rounds, returns the same vertex set in $G$ and $G_v$ for vertices in $C$. Therefore, $\card{\sfA(G)\cap C} = \card{\sfA(G_v)\cap C}$. Combining with the previous inequality, we get
    \[
        \card{\sfA(G)\cap C} ~\le~ \alpha \cdot \MDS(G, N_G^k[C]),\quad\forall C\in\CC_i, C\not\subseteq X.
    \]
    Recall that any two connected components $C, C' \in \CC_i$ are at distance in $G$ at least $2k+3$. So $G[N^{k+1}_G[C]]$ and $G[N^{k+1}_G[C']]$ are disjoint subgraphs, and, as to dominate $V(G)$ we need to dominate $N^{k}_G[C]$ and $N^{k}_G[C']$, by disjunction of theses sets and by~\cref{fact:dom}, we have
    \def\ccix{\substack{C\in\CC_i\\ C\not\subseteq X}}
    \[
        \sum_{\ccix} \MDS(G, N_G^k[C]) ~\le~ \sum_{C\in \CC_i} \MDS(G, N_G^k[C]) ~\le~ \MDS(G,V(G)) ~=~ \MDS(G) ~.
    \]
    We remark that $\card{(\sfA(G)\cap C)\setminus X} = 0$ if $C \subseteq X$. So, by combining previous inequalities, we obtain the following. %\cyril{I'm wondering in the 2nd line below (when removing the "$\setminus X$"), whether we cannot remove vertices of $S'$ that must be in all $X$, isn't? so winning le +1 in the approx.}
    
    \begin{eqnarray*}
        \card{(\sfA(G)\cap C_i)\setminus X} &=& \sum_{C\in\CC_i} \card{(\sfA(G)\cap C)\setminus X} ~=~ \sum_{\ccix} \card{(\sfA(G)\cap C)\setminus X}\\
        &\le& \sum_{\ccix} \card{\sfA(G)\cap C} ~\le~ \alpha \sum_{\ccix} \MDS(G,N^k_G[C]) \\
        &\le& \alpha\cdot \MDS(G) ~.
    \end{eqnarray*}

    Putting in \cref{eq:B(G)}, we get that Algorithm~$\sfB$ is an $(\alpha(d+1)+1)$-approximation, and even an $\alpha(d+1)$-approximation if $X=\emptyset$.

    To see that $\sfB$ is also a $k$-uniform approximation it is sufficient to see that, for any $W \subseteq V(G)$, the bound obtained is $\card{\sfB(G)\cap W} \le \alpha (d+1)\cdot \MDS(G, N_G^k[W]) + \MDS(G, N_G[W])$ where the first part is due to running $\sfA$ and the second part is due to running the brute-force.
    Indeed, if the brute-force computed a set whose intersection with $W$ was smaller than $\MDS(G, N_G[W])$, we could replace it by the minimum dominating set of $N_G[W]$ in $G$ and obtain a smaller dominating set, a contradiction. Note that $\MDS(G, N_G[W]) \le \MDS(G, N^k_G[W])$ for $k\ge 1$, proving that $\sfB$ is a $k$-uniform with an approximate ratio $\alpha(d+1)+1$ (or $\alpha(d+1)$ if $X=\emptyset$, as the brute-force is not required in that case).
    
    Now, let us prove that Algorithm~$\sfB$ has the desired round complexity.
    \begin{itemize}
    
        \item Computing $X$ in Step~2 takes $T+1$ rounds.
    
        \item Running $\sfA$ in Step~3 takes~$r$ rounds.
    
        \item For Step~4, consider the set $S$ computed by $\sfB$ at Step~3, before the brute-force.
        Observe that the set $W = V(G)\setminus N_G[X]$ is dominated by $S$. This is because vertices of $W$ are at distance at least two from $X$ and thus vertices of $N_G[W]$ cannot be in $X$. Thus, $\sfA$ applies to all vertices of $N_G[W]$, and so $W$ is indeed dominated by $S$ (\cref{fact:dom}).
        Therefore, to dominate $N_G[X]$ it is enough to select a set $S'$ from $N^2_G[X]$ (\cref{fact:dom}) as done in Step~4.
        By assumption, the connected components of $N_G^2[X]$ have weak diameter in $G$ at most $\delta$ (obviously, if $X = \emptyset$, $\delta = 0$). Therefore, the brute-force (Step~4) will take at most $\delta+1$ rounds. 
        
    \end{itemize}
    Naively, Steps~2 and~3 together take $(T+1) + r$ rounds. As Algorithm~$\sfA$ is not guaranteed to work when executed on vertices of $X$, Step~3 must be run only after Step~2. However, both steps can be run in parallel as follows. We run $\sfA$ for $r$ rounds exactly and stop it just after (since its running time on a vertex of $X$ could result into more than $r$ rounds). Then, the decision to add $u$ in $S$ (if selected by $\sfA$) is delayed for the next $T+1 - r$ rounds (this is non-negative from the choice of $T$). In parallel, $G[N^T[u]]$ is computed and is checked to be in $\sC$ or not after $T+1$ rounds. So, after $\max\set{r,T+1} = T+1$ rounds, set $S$ in Step~3 has been completed.
    
    Step~4 takes $\delta+1$ steps, so that the total round complexity of $\sfB$ is $T+\delta+2$, completing the proof of \cref{prop:local_approx_w_errors}.
\end{proof}

%%%%%%%%%%%%%%%%%%%%%%%%%%%%%%%%%%%%%%%%%%%%%%%%%%%%%%%%%%%%%%%%%%%%%%%%%

\section{Self-contained proof of Proposition~\ref{prop:uniform_planar}}\label{sec:localplanar}

\PropUniformPlanar*

\begin{proof}
To present the algorithm (cf. \cref{algo:planar}) we first need to define the ``best'' among all minimum dominating sets of a set $X$ in a labelled graph $H$. Among all of those sets, we first discard those which contain some $v$ for which there exists $w$ with $N[v]\subsetneq N[w]$, and among the remaining ones, we declare ``best'' the one lexicographically smallest considering the labels of the vertices. Note that such a set always exists.

\begin{algorithm}
\caption{The algorithm $\sfA$ for dominating set on planar graphs.}\label{algo:MDSplanar}
\begin{algorithmic}[1]

\Require A planar graph $G$.

\Ensure A dominating set $D$ of $G$ such that $\card{D\cap S}\le \RATIOP \cdot \MDS(G, N^4[S])$ for any subset $S \subseteq V(G)$.

\State $D \gets \emptyset$

\State Each vertex $u$ computes $G[N^{4}[u]]$ and all minimum dominating sets of $N^3[S]$ in $G$. Among those sets, $u$ selects the ``best'' one as $D_u$.
%selects those which do not contain any $v$ for which there exists $w$ with $N[v]\subsetneq N[w]$, and among those names $D_u$ the one lexicographically smallest considering the labels of the vertices. 

\State Each vertex $u$ picks the vertex $v_u$ of smallest label in $D_u\cap N[u]$, and adds it to $D$. %Let $S$ be set of vertices taken to be in the solution at the end of this step.

\end{algorithmic}
\label{algo:planar}
\end{algorithm}

By construction, Algorithm $\sfA$ outputs a dominating set $D$ within~$5$ rounds. We need to argue a little more for the uniform approximation.

\begin{claim}
    Algorithm $\sfA$ is a $4$-uniform $\RATIOP$-approximation.
\end{claim}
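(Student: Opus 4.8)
The plan is to fix an arbitrary $S\subseteq V(G)$, let $M$ be a minimum set dominating $N^4[S]$ so that $\card{M}=\MDS(G,N^4[S])$, and prove $\card{D\cap S}\le\RATIOP\cdot\card{M}$; feasibility and the $5$-round bound are already clear, since each $D_u$ dominates $u$, so $v_u\in N[u]$ dominates $u$ and $v_u$ is well defined. The first ingredient is a purely local optimality inequality. For $w\in D\cap S$, choose a witness $u$ with $v_u=w$; then $u\in N[w]\subseteq N[S]$, hence every vertex of $N^3[u]$ is within distance $4$ of $S$, i.e. $N^3[u]\subseteq N^4[S]$. Since to dominate a vertex set it suffices to take dominators from its closed neighbourhood, every vertex of $N^3[u]$ is dominated by $M\cap N^4[u]$, so $M\cap N^4[u]$ dominates $N^3[u]$; as $D_u$ is a \emph{minimum} dominating set of $N^3[u]$ (and its members may be assumed to lie in $N^4[u]$), this gives $\card{D_u}\le\card{M\cap N^4[u]}$. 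I will also record that each $w\in D\cap S\subseteq N^4[S]$ has a dominator $m_w\in M\cap N[w]$.

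The second step is to convert these local bounds into a global constant-factor guarantee by charging $D\cap S$ to $M$. Selecting exactly one witness $u_w$ per $w$ makes the map $w\mapsto u_w$ injective, so it is enough to bound, for each $m\in M$, the size of the fibre $\set{w\in D\cap S:m_w=m}$; summing over $m\in M$ then yields the factor $\RATIOP$. Every such $w$ lies in $N[m]$, so a priori the fibre could be as large as $\deg(m)$, and ruling this out is exactly where planarity must enter. The structural features I plan to exploit are that (i) the ``best'' tie-break forces every member of each $D_u$ to have an inclusion-maximal closed neighbourhood and, as a member of a minimum dominating set, to own a private neighbour inside $N^3[u]$, and (ii) each $u$ contributes only the single representative $v_u\in N[u]$ rather than all of $D_u$, which consolidates the selection. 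A planar discharging argument---built on Euler's formula and the absence of $K_5$ and $K_{3,3}$ as minors, applied to the incidences between the selected irredundant dominators, their private neighbours, and the vertices of $M$ inside a bounded-radius region around $m$---should cap each fibre by a constant and deliver $\card{D\cap S}\le\RATIOP\cdot\card{M}=\RATIOP\cdot\MDS(G,N^4[S])$.

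I expect the planar clustering bound to be the main obstacle. The inequality $\card{D_u}\le\card{M\cap N^4[u]}$ is genuinely local and says nothing on its own about how many \emph{distinct} vertices the independent ball computations may select around one vertex of $M$; the entire strength of the argument has to come from planarity, which must supply the numerical constant $\RATIOP$ through a face/Euler count on the dominators and their private neighbours. The radii $3$ and $4$ in the algorithm are, I believe, tuned precisely so that the containment $N^3[u]\subseteq N^4[S]$ and hence the local optimality inequality hold while keeping the round complexity at $5$, so the delicate point is to make the planar count robust enough to absorb the boundary mismatches between the different balls $N^4[u]$.
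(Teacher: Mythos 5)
Your proposal sets up the right frame (fix $S$, compare $D\cap S$ against a minimum dominating set $M$ of $N^4[S]$, observe $N^3[u]\subseteq N^4[S]$ for the witnesses $u$), but the heart of the claim is missing: the step you yourself flag as ``the main obstacle'' --- capping how many distinct selected vertices can accumulate per unit of $M$ --- is exactly the content of the proof, and ``a planar discharging argument \dots should cap each fibre by a constant'' is not an argument. Worse, the granularity you choose is the wrong one. You propose to bound, for each $m\in M$, the fibre $\set{w\in D\cap S : m_w=m}$ by a constant; Euler's formula gives \emph{average}-degree information, not maximum-degree information, and nothing in planarity prevents a single $m$ from having many selected vertices in its neighbourhood. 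The paper's count is genuinely amortized: it contracts each $u\in N^4[S]\setminus X$ onto a dominator in $X$ to form a plane multigraph on $X$, prunes it to fewer than $6\card{X}$ edges via Euler's formula, and then charges the remaining selected vertices to \emph{edges and size-two faces} of the pruned multigraph (at most $3\card{X}$ faces, each contributing at most $96$ candidates), not to individual vertices of $M$.

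The second missing mechanism is the one that makes the per-face charge finite at all. You mention that the ``best'' tie-break forces inclusion-maximal closed neighbourhoods and private neighbours, but the property the paper actually exploits is \emph{canonicity across a small separator}: if $C$ separates $u$'s region $B$ from the rest and $G[B\cup C]$ has radius at most $3$, then $D_u\cap B$ (hence $v_u$) is completely determined by $D_u\cap C$, so the whole region $B$ yields at most $2^{\card{C}}$ distinct candidates for $v_u$. Without this determinism, different vertices $u$ inside the same pocket could each pick a different $v_u$ and no constant bound follows. Your local optimality inequality $\card{D_u}\le\card{M\cap N^4[u]}$ is true but plays no role in the paper's bound and does not substitute for either of these two ingredients. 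As it stands, the proposal is a plan whose decisive step is both unexecuted and aimed at a statement (a per-vertex fibre bound) that is stronger than what is needed and not what the paper establishes.
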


\begin{proof}
Let $G$ be a planar graph, and let $D = \sfA(G)$ be the dominating set of $G$ returned by $\sfA$. Let $S \subseteq V(G)$, and let $X$ be the ``best'' minimum dominating set of $N^4[S]$ in $G$. Our goal is to obtain that $\card{D \cap S}\leq \RATIOP \card{X}$. Let the $D_u$'s and $v_u$'s be as defined in $\sfA$.

The first key observation is that given disjoint subsets  $A,B,C$ of $V(G)$ such that  $G = G[A \cup C]\cup G[B\cup C]$, that is, $C$ separates $A$ from $B$, if $G[B\cup C]$ has radius at most $3$, then for any vertex $u \in B$, the value of $D_u \cap B$ (hence of $v_u$) is completely determined by $D_u \cap C$. In other words, there is a unique best way to extend $D_u \cap C$ to $B$. Therefore, the set $B$ can only yield at most $2^{\card{C}}\cdot \MDS(G,B\cup C)$ different candidates for $v_u$ with $u\in B$.

% \cyril{multigraph and multiple edges (multiedge does not exist)}
We now discuss how to use that observation to reach the desired conclusion. Let $Y = N^4[S]\setminus X$. Since $X$ is a dominating set of $Y$, every vertex in $Y$ has a neighbor in $X$. We associate to each vertex $u \in Y$ an arbitrary vertex $w_u \in X \cap N(u)$. We consider the plane multigraph $H$ obtained from an arbitrary planar embedding of $G[X \cup Y]$ by contracting every edge $uw_u$ for $u \in Y$, but keeping any resulting loop or multiple edge. Note that $V(H) = X$ by construction. Let $H'$ be the plane multigraph obtained from $H$ by iteratively deleting any edge incident to a face of size $1$ or to two faces of size $2$. In particular, $H'$ may contain loops and multiple edges, but in the embedding they separate different parts of $V(H')$ or, in the case of multiple edges, are incident on at least one side to a face of size at least $3$. We observe that by Euler's formula, there are fewer than $2\cdot 3\card{V(H')} = 6\card{H} = 6\card{X}$ edges in $H'$. We let $Z$ be the subset of vertices of $Y$ which are involved in an edge of $E(H')$. Here, involved means that the edge corresponds in $G$ to an edge incident to them. Since every edge corresponds to at most $2$ vertices of $Y$, we have $\card{Z}\le 12 \card{X}$ hence $\card{\set{v_u \mid u \in Z}} \leq 12 \card{X}$.

It remains to bound $\card{\set{v_u \mid u \in Y \setminus Z}}$. Note that there are two types of vertices in $Y \setminus Z$. Some are vertices only involved in loops, and some are vertices only involved in loops and (at least one) edge incident to two faces of size $2$. In the first case, we can observe that such a vertex $u$ has no neighbor not in the closed neighbourhood of $w_u$, which must by the definition of ``best'' be the same as $v_u$. The second case is slightly more interesting. We consider a face $f$ in $H'$ which contained in $H$ some deleted edge where such a vertex $u$ was involved (in fact, such a face is unique, but that is irrelevant here). We observe that $f$ is a face of size $2$, consisting of two edges $e$ and $e'$ and two vertices $x_1$ and $x_2$. In $G$, the edge $e$ is incident to $s_1,s_2$ and $e'$ to $t_1,t_2$, where for each $i$, either $v_{s_i}=x_i$ or $s_i=x_i$ and similarly for $t_i$. Note that $\{x_1,x_2,s_1,s_2,t_1,t_2\}$ separates $u$ from the rest of $V(H')$, and that vertices involved in some edge inside $f$ in $H$ are all adjacent to $x_1$ or to $x_2$ hence a radius of at most $3$. Therefore, there are at most $2^{6} \cdot 2$ choices for $v_u$ if $u$ is involved in some deleted edge inside $f$. We can note that in the case where both $x_1$ and $x_2$ get selected in $D_u$ there is no extra vertex to add as all the inside is dominated. Therefore, we reduce the number of choices to $3\cdot 2^5 = 96$. Since there are at most $3\card{X}$ options for $f$, this adds up to $288\card{X}$.

All together, we get $2\card{X}$ for the set $X$ itself being possibly chosen (for example by the vertices only involved in loops), and its own choices, then $12\card{X}$ for the choices of $Z$, and finally $288\card{X}$ for the choices of the second case of vertices not in $Z$. This sums up to $\RATIOP\card{X}$, as claimed.
\end{proof}

This completes the proof of \cref{prop:uniform_planar}.
\end{proof}

\section{Conclusion}
\label{sec:conclusion}

We have proposed rather simple constant-round approximation \LOCAL algorithms for \minDS achieving, in their best, an approximation ratio of $\RATIOG + \eps$ for \minDS in Euler genus-$g$ graphs, improving the previous ratio of $\Omega(\sqrt{g})$, and also the bound of~$91+\eps$ for orientable surfaces. We believe that an even better ratio can be obtained by a finer analysis of our algorithms. We even believe that the best approximation ratio should be close to the one for planar graphs, and we propose:

\begin{conjecture}\label{conj:planar}
    For every $g \in\bbN$, there is a $7$-approximation \LOCAL algorithm for \minDS in Euler genus-$g$ graphs with round complexity $f(g)$, for some function $f$.
\end{conjecture}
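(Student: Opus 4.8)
The plan is to bypass the multiplicative $(d+1)$ loss that is inherent to the asymptotic-dimension route. Feeding the best planar uniform algorithm into \cref{prop:local_approx_w_errors} costs a factor $d+1 = 3$, since Euler genus-$g$ graphs have asymptotic dimension $2$, and this is tight (they contain large grids). Hence that route can only reach $7$ if the planar uniform ratio drops below $7/3$, far beyond the current $11+\eps$. Instead, I would argue that genus costs nothing \emph{multiplicatively}, contributing only an additive $O(g)$ overhead that is absorbed by the diameter trick of \cref{foot:91+eps}. The guiding observation is that the discharging/counting analyses behind planar MDS algorithms, such as that of \cite{HKOSV25}, rest on Euler's formula $\card{E}\le 3\card{V}-6$, and on a surface of Euler genus $g$ this becomes $\card{E}\le 3\card{V}-6+3g$: the genus enters as a single additive $+3g$ term, not as a multiplicative factor.

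Concretely, I would proceed as follows. First, take a planar \LOCAL algorithm achieving the target planar ratio $\alpha^{\star}$ (conjecturally $\alpha^{\star}=7$), together with its discharging proof. Second, identify via the genus-additivity argument of \cref{claim:genus_additivity} a set $X$ of error vertices whose non-planar behavior is confined to $O(g)$ clusters of bounded weak diameter, and brute-force these clusters, contributing $O(g)$ to the output. Third, and this is the new ingredient, re-run the planar discharging argument \emph{globally} on the genus-$g$ graph (minus these clusters) embedded on its surface, and show that the only inequality affected by the nonplanarity is the edge-density bound, which loses exactly $+3g$. This would yield an output of size at most $\alpha^{\star}\cdot\MDS(G) + O(g)$. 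Finally, apply the diameter trick: each connected component has diameter at most $3\MDS(G)$, so either its diameter is below a threshold $B = O(g/\eps)$ and an optimal solution is brute-forced, or the additive $O(g)$ term is at most $\eps\cdot\MDS(G)$; in either case the ratio is $\alpha^{\star}+\eps$, and removing the $\eps$ to reach exactly $7$ would require either a sharper accounting of the additive term or a planar algorithm meeting $\alpha^{\star}=7$ without slack.

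The main obstacle is the third step. The framework of \cref{prop:local_approx_w_errors} is deliberately black-box: it never inspects the internal charging of $\sfA$, which is precisely why it can only exploit the generic asymptotic dimension and must pay $d+1$. Converting the multiplicative loss into an additive one requires opening up the planar algorithm's proof and showing that its local certificates (the face and density counts) assemble into a single global discharging on the surface, so that each optimum vertex is charged a bounded number of times \emph{in total} rather than once per color class. The difficulty is twofold: the analysis of \cite{HKOSV25} is not stated in a form that transfers to surfaces, and the stitching across the $O(g)$ brute-forced clusters must not reintroduce any multiplicative overhead. Reconciling a genus-robust global discharging with constant round complexity is exactly what keeps the conjecture open, and I expect that settling it will also pin down the true planar ratio, consistent with the belief that the two ratios coincide.
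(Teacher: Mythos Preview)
The statement you are addressing is a \emph{conjecture}, stated in the paper's conclusion as an open problem; the paper offers no proof for it, so there is nothing on the paper's side to compare your attempt against. Your write-up is, appropriately, not a proof but a research outline, and you flag this yourself (``what keeps the conjecture open'').

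That said, your plan as stated stacks two independent open problems rather than one. You present the ``third step'' (a genus-robust global discharging that turns the $(d+1)$ multiplicative loss into an additive $O(g)$) as the main obstacle, but your first step already assumes a planar \LOCAL algorithm with ratio $\alpha^{\star}=7$. The best known planar ratio is $11+\eps$ (\cref{obs:localplanar}), and $7$ is the conjectured \emph{lower} bound for planar graphs; closing that gap is itself a well-known open question. So even if your additive-lifting idea in step three went through perfectly, you would only obtain $(11+\eps)$ for genus-$g$, not $7$. Conversely, if a planar $7$-approximation were in hand, one would still need your step three, and you correctly identify that the black-box machinery of \cref{prop:local_approx_w_errors} cannot deliver it: one must open up the discharging and argue that the $+3g$ in Euler's formula is the \emph{only} place genus enters. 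That is plausible in spirit but far from routine, since the analysis in \cite{HKOSV25} uses planarity in several places (e.g.\ bipartite-planar density bounds, face arguments) and each occurrence would have to be audited. In short: your proposal is a reasonable sketch of why one might believe the conjecture, but it is not a proof, and neither is anything in the paper.
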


Actually, our algorithms also apply to graph classes that are locally of bounded genus, the approximation ratio depending only linearly on the asymptotic dimension of the class. We left open the question of extending the techniques and algorithms developed in this paper to approximate other optimization problems in the \LOCAL model.

%\myparagraph{Acknowledgments:} ...

\newpage

\bibliographystyle{my_alpha_doi}
% Permet d'avoir des un bibliographystyle{my_alpha_doi}
% avec le style LIPIcs.
%
% C'est la copie de la redéfinition de la commande
% \thebibliography par lipics-v2021.cls où j'ai remplacé
% un \leftmargin8.5mm par \leftmargin\MYMARGIN que l'on
% peut fixer à 18.5mm ou moins.
%
% Cyril - 27/05/2025

\makeatletter
\ifdefined\MYMARGIN\else\def\MYMARGIN{18.5mm}\fi
\renewenvironment{thebibliography}[1]
  {\if@noskipsec \leavevmode \fi
   \par
   \@tempskipa-3.5ex \@plus -1ex \@minus -.2ex\relax
   \@afterindenttrue
   \@tempskipa -\@tempskipa \@afterindentfalse
   \if@nobreak
     \everypar{}%
   \else
     \addpenalty\@secpenalty\addvspace\@tempskipa
   \fi
   \noindent
   \rlap{\color{lipicsLineGray}\vrule\@width\textwidth\@height1\p@}%
   \hspace*{7mm}\fboxsep1.5mm\colorbox[rgb]{1,1,1}{\raisebox{-0.4ex}{%
     \normalsize\sffamily\bfseries\refname}}%
   \@xsect{1ex \@plus.2ex}%
   \list{\@biblabel{\@arabic\c@enumiv}}%
        {\leftmargin\MYMARGIN
         \labelsep\leftmargin
         \settowidth\labelwidth{\@biblabel{#1}}%
         \advance\labelsep-\labelwidth
         \usecounter{enumiv}%
         \let\p@enumiv\@empty
         \renewcommand\theenumiv{\@arabic\c@enumiv}}%
   \fontsize{9}{12}\selectfont
   \sloppy
   \clubpenalty4000
   \@clubpenalty \clubpenalty
   \widowpenalty4000%
   \sfcode`\.\@m\protected@write\@auxout{}{\string\gdef\string\@pageNumberStartBibliography{\thepage}}}
\makeatother

\bibliography{bibliography,biblio_cyril}

\end{document}